\newtheorem{theorem}{Theorem}
\newtheorem{corollary}[theorem]{Corollary}
\newtheorem{lemma}[theorem]{Lemma}
\newcommand{\prob}[1]{\mathbb{P}\left(#1\right)}
\newcommand{\esp}[1]{\mathbb{E}\left[#1\right]}
\newcommand{\floor}[1]{\left\lfloor #1 \right\rfloor}
\tikzstyle{box}=[shape=rectangle, text height=1.5ex, text depth=0.25ex, yshift=0.5mm, fill=white, draw=black, minimum height=5mm, yshift=-0.5mm, minimum width=5mm, font={\small}]
\tikzstyle{gate}=[shape=rectangle, text height=1.5ex, text depth=0.25ex, yshift=0.5mm, fill=white, draw=black, minimum height=5mm, yshift=-0.5mm, minimum width=5mm, font={\small}, tikzit category=circuit]
\tikzstyle{big gate}=[shape=rectangle, text height=1.5ex, text depth=0.25ex, yshift=0.5mm, fill=white, draw=black, minimum height=10mm, yshift=-0.5mm, minimum width=5mm, font={\small}, tikzit category=circuit]
\tikzstyle{Z dot}=[inner sep=0mm, minimum size=2mm, shape=circle, draw=black, fill={rgb,255: red,221; green,255; blue,221}, tikzit category=zx]
\tikzstyle{Z phase dot}=[minimum size=5mm, font={\footnotesize\boldmath}, shape=rectangle, rounded corners=2mm, inner sep=0.2mm, outer sep=-2mm, scale=0.8, tikzit shape=circle, draw=black, fill={rgb,255: red,221; green,255; blue,221}, tikzit draw=blue, tikzit category=zx]
\tikzstyle{X dot}=[Z dot, shape=circle, draw=black, fill={rgb,255: red,255; green,136; blue,136}, tikzit category=zx]
\tikzstyle{X phase dot}=[Z phase dot, tikzit shape=circle, tikzit draw=blue, fill={rgb,255: red,255; green,136; blue,136}, font={\footnotesize\boldmath}, tikzit category=zx]
\tikzstyle{hadamard}=[fill=yellow, draw=black, shape=rectangle, inner sep=0.6mm, minimum height=1.5mm, minimum width=1.5mm, tikzit category=zx]
\tikzstyle{paulibox}=[fill={rgb,255: red,221; green,221; blue,255}, draw=black, shape=rectangle, inner sep=0.6mm, minimum height=5mm, minimum width=5mm, font={\footnotesize}, text height=1.5ex, text depth=0.25ex, tikzit category=zx]
\tikzstyle{vertex}=[inner sep=0mm, minimum size=1mm, shape=circle, draw=black, fill=black, tikzit category=misc]
\tikzstyle{vertex set}=[inner sep=0mm, minimum size=1mm, shape=circle, draw=black, fill=white, font={\footnotesize\boldmath}, tikzit category=misc]
\tikzstyle{small black dot}=[fill=black, draw=black, shape=circle, inner sep=0pt, minimum width=1.2mm, tikzit category=circuit]
\tikzstyle{cnot ctrl}=[fill=black, draw=black, shape=circle, inner sep=0pt, minimum width=1.2mm, tikzit category=circuit]
\tikzstyle{cnot targ}=[fill=white, draw=white, shape=circle, tikzit category=circuit, label={center:$\oplus$}, inner sep=0pt, minimum width=2.1mm, tikzit fill={rgb,255: red,102; green,204; blue,255}, tikzit draw=black]
\tikzstyle{ket}=[fill=white, draw=black, shape=regular polygon, regular polygon sides=3, regular polygon rotate=-30, scale=0.7, inner sep=1pt, tikzit category=circuit, tikzit shape=rectangle, tikzit fill=green]
\tikzstyle{bra}=[fill=white, draw=black, shape=regular polygon, regular polygon sides=3, regular polygon rotate=30, scale=0.7, inner sep=1pt, tikzit category=circuit, tikzit shape=rectangle, tikzit fill=red]
\tikzstyle{scalar}=[shape=rectangle, text height=1.5ex, text depth=0.25ex, yshift=0.5mm, fill=white, draw=black, minimum height=5mm, yshift=-0.5mm, minimum width=5mm, font={\small}]
\tikzstyle{clabel}=[fill=white, draw=none, shape=rectangle, tikzit fill={rgb,255: red,56; green,255; blue,242}, font={\footnotesize}, inner sep=1pt, tikzit category=labels]
\tikzstyle{empty diagram}=[draw={gray!40!white}, dashed, shape=rectangle, minimum width=1cm, minimum height=1cm, tikzit category=misc]
\tikzstyle{simple}=[-]
\tikzstyle{hadamard edge}=[-, dashed, dash pattern=on 2pt off 0.5pt, thick, draw={rgb,255: red,68; green,136; blue,255}]
\tikzstyle{box edge}=[-, dashed, dash pattern=on 2pt off 0.5pt, thick, draw={rgb,255: red,203; green,192; blue,225}]
\tikzstyle{brace edge}=[-, tikzit draw=blue, decorate, decoration={brace,amplitude=1mm,raise=-1mm}]
\tikzstyle{diredge}=[->]
\tikzstyle{double edge}=[-, double, shorten <=-1mm, shorten >=-1mm, double distance=2pt]
\tikzstyle{gray edge}=[-, {gray!60!white}]
\tikzstyle{pointer edge}=[->, very thick, gray]
\tikzstyle{boldedge}=[-, line width=1.6pt, shorten <=-0.17mm, shorten >=-0.17mm]
\tikzstyle{white dot}=[Z dot]
\tikzstyle{gray dot}=[X dot]
\tikzstyle{white phase dot}=[Z phase dot]
\tikzstyle{gray phase dot}=[X phase dot]
\tikzstyle{small hadamard}=[hadamard]
\begin{document}


\title{Classically Simulating Quantum Supremacy IQP Circuits\\ 
through a Random Graph Approach}

\author{Julien Codsi}%
\email{julien.codsi@umontreal.ca}
 \affiliation{University of Montréal}
\author{John van de Wetering}%
 \email{john@vdwetering.name}
 \homepage{http://vdwetering.name}
\affiliation{%
 University of Amsterdam
}%

\date{\today}

\begin{abstract}
Quantum Supremacy is a demonstration of a computation by a quantum computer that can not be performed by the best classical computer in a reasonable time. A well-studied approach to demonstrating this on near-term quantum computers is to use random circuit sampling. It has been suggested that a good candidate for demonstrating quantum supremacy with random circuit sampling is to use \emph{IQP circuits}. These are quantum circuits where the unitary it implements is diagonal.
In this paper we introduce improved techniques for classically simulating random IQP circuits. We find a simple algorithm to calculate an amplitude of an $n$-qubit IQP circuit with dense random two-qubit interactions in time $O(\frac{\log^2 n}{n} 2^n )$, which for sparse circuits (where each qubit interacts with $O(\log n)$ other qubits) runs in $o(2^n/\text{poly}(n))$ for any given polynomial. 
Using a more complicated stabiliser decomposition approach we improve the algorithm for dense circuits to $O\left(\frac{(\log n)^{4-\beta}}{n^{2-\beta}} 2^n \right)$ where $\beta \approx 0.396$.
We benchmarked our algorithm and found that we can simulate up to 50-qubit circuits in a couple of minutes on a laptop. We estimate that 70-qubit circuits are within reach for a large computing cluster.
\end{abstract}

\maketitle

\setlength{\parskip}{1pt plus1pt minus1pt}
\setlength{\parindent}{15pt}

Recent years have seen the development of noisy quantum computers that have enough qubits and coherence to start to probe the limits of classical simulation.
In fact, in 2019 Arute et al.~\cite{arute2019quantum} already claimed to have reached \emph{quantum supremacy}: a quantum computation that cannot be simulated by any classical computer in a reasonable time frame. This was done by sampling from a random quantum circuit, and computing a metric called the linear cross-entropy benchmark (XEB). 
Their claim was that it would take the best supercomputer in the world 10.000 years to simulate the computation they did.
However, soon after that, improvements in tensor contraction techniques reduced this number to just days~\cite{huangClassicalSimulationQuantum2020}, and even hours on a moderately sized GPU cluster~\cite{PhysRevLett.129.090502}. 
By allowing the simulation to produce correlated bitstrings, much higher XEB scores can be reached with fewer resources~\cite{pan2021simulating}, and bypassing directly simulating the computation entirely, non-trivial XEB scores turned out to also be generatable in mere seconds on a single GPU~\cite{gao2021limitations}.

This progress shows that claims of quantum supremacy should be made carefully, as improvements in classical algorithms can quickly gain orders of magnitudes in improvement. In this paper we will consider the classical simulation of a different type of random quantum circuit that has been proposed as a good candidate for quantum supremacy experiments.

Instantaneous Quantum Polynomial (IQP) circuits are quantum circuits where the input is prepared in the all-zero state $\ket{0\cdots 0}$, the unitary is of the form $H^{\otimes n} D H^{\otimes n}$ where $H^{\otimes n}$ is a Hadamard gate applied to all the qubits, and $D$ is a unitary consisting of polynomially many diagonal gates~\cite{shepherd2009temporally}. The name `instantaneous' comes from the fact that all the diagonal gates commute, so that there is no time order encoded into the circuit.
IQP circuits were originally introduced in~\cite{shepherd2009temporally} as a simplified model of quantum computation where interesting, and hard to classically simulate problems could be formulated.
Indeed, it was proven in~\cite{bremner_classical_2011} that the ability to efficiently simulate IQP circuits would imply a collapse of the polynomial hierarchy to the third level, which is considered very unlikely. 
This was improved in~\cite{bremner_average-case_2016} to hardness under a more reasonable additive error bound. 
Then in~\cite{bremner_achieving_2017}, it was shown that even random IQP circuits consisting of just powers of the $T = \text{diag}(1,e^{i\pi/4})$ gate and $O(n\log n)$ $CS = \text{diag}(1,1,1,i)$ gates are likely to be hard to simulate, and that furthermore they can be compiled onto a 2D architecture within a reasonable depth, and that they can be constructed in such a way to be resilient to some noise. These properties make these circuits an interesting candidate for quantum supremacy experiments, and raise the question of where the boundary of classical simulability lies: even though the simulation is likely to be asymptotically hard, it might still be that in practical regimes, the results can still be efficiently simulated.

In this paper we find better algorithms for simulating random $\{T,CS\}$ IQP circuits.  We do this by realising that such circuits follow the structure of Erdös-Rényi random graphs. Such graphs have relatively large independent vertex sets. This allows us to use techniques from the stabiliser decomposition technique of simulation~\cite{bravyiImprovedClassicalSimulation2016,Bravyi2019simulationofquantum,qassim2021improved,kissinger2021simulating,kissinger2022classical} to cut the circuit into a sum of smaller instances. In particular, we find we can exactly calculate amplitudes of random dense Clifford+$T$ IQP circuits in time $O(\frac{\log^2 n}{n} 2^n)$, and with a more complicated algorithm in time $O\left(\frac{(\log n)^{4-\beta}}{n^{2-\beta}} 2^n \right)$ where $\beta \approx 0.396$ is the stabiliser decomposition constant of~\cite{qassim2021improved,kissinger2022classical}. For the random sparse circuits of~\cite{bremner_achieving_2017} we find we can calculate an amplitude in time $O\left(\frac{n\log\log(n)}{\log(n)} 2^{n\left(1- O\left(\frac{\log\log(n)}{\log(n)}\right)\right)} \right)$. Note that this bound is faster than $O(2^n/poly(n))$ for any given polynomial.
We can boost the calculation of amplitudes to a procedure for sampling from the circuit, by using the `gate-by-gate' simulation technique of~\cite{bravyi2021simulate} that avoids calculating marginals. This technique turns out to be particularly suited to IQP circuits, as it only requires an additional sample per non-diagonal gate, of which there are $O(n)$ (corresponding to the layers of Hadamard gates). Our algorithms can hence weakly sample from the dense, respectively sparse, circuits in time $O\left(\frac{(\log n)^{4-\beta}}{n^{1-\beta}} 2^n \right)$, respectively $O\left(\frac{n^2\log\log(n)}{\log(n)} 2^{n\left(1- O\left(\frac{\log\log(n)}{\log(n)}\right)\right)} \right)$.
This should be compared to the cost $O(n^2 2^n)$, respectively $O(n\log n 2^n)$ of doing a state vector simulation, or $O(n^4 2^{O(n^2)})$, respectively $O(n^2\log^2 n 2^{O(n\log n)})$ of using stabiliser decompositions directly.


Our results give asymptotic polynomial improvements over the previous best, but these improvements are also of practical significance. 
We implemented the simpler algorithm, and found that, depending on the density of the circuit, we can calculate an amplitude of 30- to 50-qubit IQP circuits on a single CPU core on a laptop in a couple of minutes; see Figure~\ref{fig:Benchmarks}. As our algorithm is easily distributed in parallel, we estimate that a 100.000 CPU core cluster could calculate an amplitude of a dense 60-qubit circuit in about an hour, and that 70-qubit circuits should be within reach of the world's best supercomputers.

\emph{The ZX-calculus.}---%
Since our algorithm was found by representing IQP computations as ZX-diagrams, we will give a brief overview of the ZX-calculus~\cite{CD1,CD2}. For an in-depth
reference see~\cite{vandewetering2020zxcalculus}.
The \zxcalculus is a diagrammatic language similar to the familiar
quantum circuit notation.  A \emph{\zxdiagram} (or simply
\emph{diagram}) consists of \emph{wires} and \emph{spiders}.  Wires
entering the diagram from the left are \emph{inputs}; wires exiting to
the right are \emph{outputs}.  Given two diagrams we can compose them
by joining the outputs of the first to the inputs of the second, or
form their tensor product by simply stacking the two diagrams.

Spiders are linear operations which can have any number of input or output
wires.  There are two varieties, $Z$ spiders depicted as green dots: 
\begin{equation}
\tikzfig{Zsp-a} \ := \ \ketbra{0...0}{0...0} +
e^{i \alpha} \ketbra{1...1}{1...1} \label{eq:def-Z-spider}
\end{equation}
and $X$ spiders depicted as red dots:
\begin{equation}
\tikzfig{Xsp-a} \ := \ \ketbra{+...+}{+...+} +
e^{i \alpha} \ketbra{-...-}{-...-}
\end{equation}
When the phase $\alpha$ is zero, we will omit it from the notation.
The diagram as a whole corresponds to a linear map built from the
spiders (and permutations) by the usual composition and tensor product
of linear maps.  As a special case, diagrams with no inputs represent
(unnormalised) state preparations.
For instance:
\begin{equation}\label{eq:state-ZX}
\begin{array}{rcccl}
  \tikzfig{ket-+} & = & \ket{0} + \ket{1}& \ =& \sqrt{2}\ket{+} \\[0.2cm]
  \tikzfig{ket-0} & = & \ket{+} + \ket{-}& \ =& \sqrt{2}\ket{0} \\[0.2cm]
  \tikzfig{Z-a} & = & \ketbra{0}{0} + e^{i \alpha} \ketbra{1}{1}&\ = & Z_\alpha \\[0.2cm]
  \end{array}
\end{equation}
Here the last one is the $Z_\alpha$ phase gate.

For convenience, special notation for
the Hadamard gate is used:
\begin{equation}\label{eq:Hdef}
\hfill
\tikzfig{had-alt}
\hfill
\end{equation}

Two diagrams are considered \emph{equal} when one can be deformed to
the other by moving the vertices around in the plane, bending,
unbending, crossing, and uncrossing wires, as long as the connectivity
and the order of the inputs and outputs is maintained. Equivalently, a
ZX-diagram can be considered as a graphical depiction of a tensor network,
as in e.g.~\cite{Penrose}. The Z- and X-spiders are symmetric tensors, and hence, like for other tensor networks of symmetric tensors, the interpretation of a ZX-diagram is unaffected by deformation. 

Quantum circuits can be translated into \zxdiagrams in a straightforward manner. 
The controlled $Z$ phase gates, Hadamard and the $Z$ phase gates each have a simple representation as a ZX-diagram:
\begin{equation}\label{eq:zx-gates}
\CZ_\alpha = \sqrt{2}\tikzfig{cz-alpha} \qquad H = \tikzfig{h-alone}
\qquad Z_\alpha = \tikzfig{Z-a}
\end{equation}
Here, the way we represent the $\CZ_\alpha$ gates is as a \emph{phase gadget}~\cite{kissinger2019tcount}, a particularly useful type of subdiagram that feature heavily in ZX-calculus based optimisation routines~\cite{kissinger2019tcount,deBeaudrap2020Techniques,wetering-gflow}.
Note that we will be particularly interested in the $CS=\CZ_{\frac\pi2}$ gate and the $T=Z_{\frac\pi4}$ gate.
Since the gates of Eq.~\eqref{eq:zx-gates} form a universal gate set, by composing them we can represent any quantum circuit as a ZX-diagram.
In fact, as we can also represent state preparations and post-selections, 
ZX-diagrams with arbitrary angles are expressive enough to represent any linear map~\cite{CD2}. When we restrict the angles to multiples of $\pi/2$, the maps it represents correspond to Clifford maps: linear maps that can be expressed as a combination of stabiliser state preparations, Clifford unitaries, and stabiliser post-selections~\cite{BackensCompleteness}. 
Instead restricting the angles to multiples of $\pi/4$ gives us the Clifford+$T$ \emph{fragment}, which corresponds to those linear maps that can be constructed from Clifford+$T$ unitaries together with state preparations and post-selections~\cite{ng_completeness_2018,SimonCompleteness}.

In addition to this extra flexibility which allows us to represent arbitrary linear maps, the real utility for ZX-diagrams comes from the set of rewrite rules they satisfy. This set of equations is called the \zxcalculus. Diagrams that can be transformed into each other using the rules of the ZX-calculus correspond to equal linear maps. We will only need a small number of rules:
\begin{equation}\label{eq:ZX-rules}
    \tikzfig{rules}
\end{equation}
These are the \emph{spider-fusion} rule---that adjacent spiders of the same colour fuse together (which also holds for the X-spider)---and special cases of the \emph{colour-change} rule---that a Hadamard can be commuted through a spider to change its colour---and the \emph{$\pi$-copy} rules---that a $\pi$ phase can be commuted through the opposite colour~\cite{vandewetering2020zxcalculus}.

\emph{IQP circuits as ZX-diagrams.}---%
As a ZX-diagram, an IQP circuit can be represented, up to some known global non-zero scalar as
\begin{equation}
  \scalebox{0.9}{\tikzfig{def-IQP}}
\end{equation}
Where the phases $x_i, y_{i,j} \in \{0,\ldots,7\}$ arise from the powers of the $T$ and $CS$ gates in the circuit. 
We note that $y_{i,j} = 4 $ corresponds to having four CS-gates in a row between the qubit $i$ and $j$ which is equivalent to the identity. We can see this in the ZX-diagram as:
\begin{equation}
    \tikzfig{4-CS-ID}
\end{equation}
The extra $\pi$ phase on the qubits is compensated by the other phases in the definition of $CZ_{\alpha}$ in Eq.~\eqref{eq:zx-gates}. A similar derivation can be done for $y_{i,j} = 0$.
We can hence assume that the diagram is written in such a way that the trivial phase gadgets are removed. There is then a connection via a phase gadget between an $x_i$ and $x_j$ pair when ${y_{i,j}\neq 0}$ and $y_{i,j} \neq 4$.

\emph{Calculating amplitudes.}---%
We first address the case of \emph{strong} simulation of an IQP circuit, i.e.~calculating amplitudes of the circuit. We will show later how we can derive \emph{weak} simulation, i.e.~sampling from strong simulation with linear overhead in the number of qubits. 
Without loss of generality, we can assume that we want to know the amplitude of observing $0^n$ from an IQP circuit $C$. We can represent $\bra{0^n}C\ket{0^n}$, up to some known power of $\sqrt{2}$, as a ZX-diagram, and simplify it as follows:
\begin{equation}\label{eq:IQP-amplitude}
    \scalebox{0.85}{\tikzfig{Strong-sim-IQP}}
\end{equation}

To calculate the value of such diagrams, we will use a stabiliser decomposition approach~\cite{bravyi2016trading,Bravyi2019simulationofquantum}.
We will show that it is possible to remove a qubit and all its adjacent phase gadgets from a ZX-diagram at the cost of having to solve two (smaller) instances instead of one. 

The idea is to observe that the definition of the Z-spider as a linear map~\eqref{eq:def-Z-spider} means we can decompose it as a sum of diagrams containing X-spiders via~\eqref{eq:state-ZX}:
\begin{equation}\label{eq:vertex-cut}
    \tikzfig{vertex-cut} \quad \forall \alpha \in [0,2\pi]
\end{equation}
Applying this to one of the $x_i$ spiders in Eq.~\eqref{eq:IQP-amplitude}, we can then remove its previously adjacent phase gadgets, using the rules of Eq.~\eqref{eq:ZX-rules}:
\begin{equation}\label{eq:gadget-cut}
    \tikzfig{gadget-cut}
\end{equation}
We can view Eq.~\eqref{eq:vertex-cut} as a stabiliser decomposition, which then propagates to remove additional $T$-like phases that are adjacent. In this sense it can be seen as a special case of the stabiliser decomposition of many $\ket{\text{cat}_3}$ states connected together; see~\cite{qassim_improved_2021,kissinger2022classical}.

This cutting procedure takes linear time and creates two new diagrams representing IQP amplitudes with one less qubit. We could continue this process until nothing is left but a complex number, but this would require summing up $O(2^n)$ terms. However, it is possible to do better by removing qubits up until we are left with a fully disconnected diagram of some size $k$. This fully disconnected diagram can then be contracted in linear time as it is just the product of $k$ complex numbers. This leads to an algorithm that runs in time $O(k2^{n-k})$. Thus, it is fruitful to find a set of qubits to remove that maximizes the value of $k$, i.e.~the largest set of qubits that are not connected to each other. In the next section, we will consider how the strong simulation of IQP circuits can be represented by random graphs which will give us lower bound on the value of k.

\emph{An algorithm for random IQP circuits.}---%
We are considering two random distributions over IQP circuits.
First, the random distribution for \emph{dense} IQP circuits is obtained by uniformly and independently choosing a power of $T$ gates $x_i$ on every qubit and a random power of the $CS$ gate $y_{i,j}$ for every pair of qubits. 
These were shown in~\cite{bremner_average-case_2016}, under mild assumptions, to be hard to classically sample from in the average case.
Note that as $CS^4 = \text{id}$, in the dense case there is a 3/4 change of a non-trivial interaction between a given pair of qubits. Furthermore, since $CS^2 = CZ$, in this case the interaction is Clifford.

Second, the random distribution for \emph{$\gamma$-sparse} IQP circuits is obtained in a similar manner, but now, every pair of qubits only has a probability of $p=\gamma\frac{\ln(n)}{n}$ to have a power of a $CS$ gate between them, so that each qubit interacts with $O(\ln n)$ other qubits. It has been shown that, under slightly different hardness assumptions, for $\gamma$ large enough it is also hard to sample from these circuits~\cite{bremner_achieving_2017}.

We can define the \emph{interaction graph} of an IQP circuit as the graph where we have one vertex per qubit and where there is an edge between two vertices iff the qubits they represent are connected by a phase gadget. The maximal value of $k$ above for a given circuit then corresponds to the \emph{independence number} $\alpha(G)$ of its interaction graph~$G$. We note that finding the largest independent set of a graph can be done in $\Tilde{O}(1.1996^n)$~\cite{xiao_exact_2017}. Since the search of the maximal independent set need only to be done once and its time complexity is a lot lower than the one for the algorithm we will construct, we will omit it in the rest of the complexity analysis. 

Interestingly, interaction graphs of random IQP circuits are random graphs under the Erdős–Rényi model. Specifically, random dense $n$-qubit IQP circuits have interaction graphs distributed like $G(n,3/4)$.
This $3/4$ comes from the fact that a uniformly random power of $CS$ has a $1/4$ chance to be the identity. Similarly,  random $\gamma$-sparse IQP circuit give rise to interaction graphs distributed like $G(n,\frac{3\gamma\ln(n)}{4n})$. We can hence use tools from the random graphs literature to bound the independence number obtained, which gives us a guarantee on the time complexity of our algorithm. The following classical result will be useful in particular:
\begin{theorem}[Matula, 1972~\cite{matula_employee_1972}]
For $p\in(0,1)$, $\alpha(G(n,p))$ is tightly concentrated around $2\log_{1/(1-p)}{n}$. More precisely, let $b=\frac{1}{1-p}$, $\varepsilon>0$ and $d=2\log_{b}{n}-2\log_{b}(\log_{b}(n))+2\log_{b}{e/2}+1$, then
$$\lim_{n\rightarrow \infty} \prob{\floor{d-\varepsilon}\geq \alpha(G(n,p))\geq \floor{d+\varepsilon}} = 1.$$
\end{theorem}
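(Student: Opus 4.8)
The plan is the classical first- and second-moment argument. For $k\in\N$ let $X_k$ be the number of independent sets of size $k$ in $G(n,p)$, so that $\alpha(G(n,p))\ge k$ exactly when $X_k\ge 1$; writing $b=\tfrac{1}{1-p}$ as in the statement, $\esp{X_k}=\binom{n}{k}(1-p)^{\binom{k}{2}}=\binom{n}{k}b^{-\binom{k}{2}}$. The first step is to locate the threshold at which this expectation turns from diverging to vanishing. Stirling's formula gives $\ln\esp{X_k}=k\big(\ln\tfrac{en}{k}-\tfrac{k-1}{2}\ln b\big)+O(\log k)$, and the bracketed factor vanishes exactly when $k=1+2\log_b\tfrac{en}{k}$; substituting the ansatz $k\approx 2\log_b n$ and expanding $\log_b(en/k)$ gives the self-consistent solution $k=d+o(1)$, with $d$ precisely the quantity in the statement (the $-2\log_b\log_b n$ term and the additive constant $2\log_b(e/2)+1$ are exactly what drops out of $\log_b(en/k)$ at $k\approx 2\log_b n$). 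Linearising around $k=d$ then gives, for $k$ at bounded distance from $d$, $\ln\esp{X_k}=(1+o(1))(d-k)\ln n$, so $\esp{X_k}$ is super-polynomially large once $k$ is a fixed amount below $d$ and super-polynomially small once $k$ is a fixed amount above $d$; this drives the tight concentration of $\alpha(G(n,p))$ onto the window $\{\floor{d-\varepsilon},\ldots,\floor{d+\varepsilon}\}$.

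For the \emph{upper bound}, take $k^{+}=\floor{d+\varepsilon}+1$, so $k^{+}-d>\varepsilon$. Then $\esp{X_{k^{+}}}\to 0$ by the estimate above, and Markov's inequality gives $\prob{\alpha(G(n,p))\ge k^{+}}=\prob{X_{k^{+}}\ge 1}\le\esp{X_{k^{+}}}\to 0$, i.e.\ $\alpha(G(n,p))\le\floor{d+\varepsilon}$ with probability tending to $1$.

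For the \emph{lower bound}, take $k^{-}=\floor{d-\varepsilon}$, so $d-k^{-}\ge\varepsilon$ and $\esp{X_{k^{-}}}\to\infty$; the first moment alone is not enough, so I would use the second moment method. Summing $\esp{X_k^2}$ over ordered pairs $(S,T)$ of $k$-sets according to the overlap $i=|S\cap T|$ — for which $2\binom{k}{2}-\binom{i}{2}$ prescribed edges must be absent — gives
\[
\frac{\esp{X_k^2}}{\esp{X_k}^2}=\sum_{i=0}^{k}g_i,\qquad g_i=\frac{\binom{k}{i}\binom{n-k}{k-i}}{\binom{n}{k}}\,b^{\binom{i}{2}}.
\]
Here $g_0=\binom{n-k}{k}/\binom{n}{k}\to 1$ (using $k^{2}/n\to 0$) and $g_k=1/\esp{X_k}\to 0$, and the remaining task is $\sum_{i=1}^{k-1}g_i=o(1)$. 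For this I would use the ratio $g_{i+1}/g_i=\tfrac{(k-i)^2}{(i+1)(n-2k+i+1)}\,b^{i}$, a routine analysis of which shows the sum is dominated by its endpoint terms $g_1=\Theta(k^{2}/n)$ and $g_{k-1}=\Theta\big(\tfrac{bk}{n}\,\esp{X_k}^{-1}\big)$, both $o(1/k)$ since $k=O(\log n)$ and $\esp{X_k}\to\infty$. Chebyshev's inequality (equivalently Paley--Zygmund) then gives $\prob{X_{k^{-}}=0}\le\esp{X_{k^{-}}^2}/\esp{X_{k^{-}}}^2-1\to 0$, i.e.\ $\alpha(G(n,p))\ge\floor{d-\varepsilon}$ with probability tending to $1$. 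The two bounds together sandwich $\alpha(G(n,p))$ as claimed.

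The main obstacle I expect is the first step: extracting $d$ with its correct second-order $\log_b\log_b n$ term and exact additive constant from Stirling's formula, and checking that the super-polynomial growth/decay of $\esp{X_k}$ at $k=\floor{d\mp\varepsilon}$ holds uniformly in $n$. The estimate $\sum_{i=1}^{k-1}g_i=o(1)$ is also somewhat fiddly — one must verify that the interior terms are controlled by $g_1$ and $g_{k-1}$ — though it is a well-trodden computation, and as an alternative the lower bound can be obtained from Janson's inequality.
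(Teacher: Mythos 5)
The paper does not prove this theorem: it is cited from Matula (1972) and used as a black box, and the Supplemental Material proves only Theorem~3 (the $\gamma$-sparse case), not this one. So there is no paper proof to compare against.

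Your proposal is nevertheless a correct sketch of the standard argument, and it is essentially how Matula (and Bollobás--Erd\H{o}s, Grimmett--McDiarmid for the equivalent clique-number statement) established the result: locate the threshold $d$ via Stirling's approximation of $\ln\esp{X_k}$, observe that $\ln\esp{X_k}\approx (d-k)\ln n$ near the threshold, apply Markov for the upper bound and a second-moment / overlap-decomposition argument for the lower bound. Your formula for $g_i$ is right, and $g_0\to 1$, $g_k=1/\esp{X_k}$, $g_1=\Theta(k^2/n)$, $g_{k-1}=\Theta(kb/n\cdot\esp{X_k}^{-1})$ are all correct. Two small imprecisions worth noting: (i)~when $k$ is a \emph{fixed} amount below or above $d$, $\esp{X_k}$ is only polynomially (not super-polynomially) large or small, namely $n^{\pm\Theta(\varepsilon)}$; this is still ample for both the Markov step and the $o(1/k)$ bounds on $g_1,g_{k-1}$, so the conclusion is unaffected, but the wording overstates the rate. (ii)~The claim that $\sum_{i=1}^{k-1}g_i$ is controlled by the endpoint terms does require the standard split into $i\le ck$ and $i>ck$ with a geometric-series comparison on each side, using the sign of $\ln(g_{i+1}/g_i)$; you flag this yourself as the fiddly part, and it is, but it is a well-trodden computation. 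Also note the theorem as printed in the paper has its inequalities reversed (it should read $\floor{d+\varepsilon}\geq\alpha\geq\floor{d-\varepsilon}$); you interpreted it correctly.
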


\begin{corollary} \label{cor : independent set dense graph}
    Let $p\in(0,1)$, $b=\frac{1}{1-p}$ then $\alpha(G(n,p)) \geq 2\log_{b}{n}-2\log_{b}(\log_{b}(n))$ with high probability.
\end{corollary}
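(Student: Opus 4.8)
The plan is to derive Corollary~\ref{cor : independent set dense graph} directly from Matula's theorem by a simple asymptotic comparison of the quantity $d$ appearing there with the claimed lower bound $2\log_b n - 2\log_b(\log_b n)$. First I would write $d = 2\log_b n - 2\log_b(\log_b n) + \left(2\log_b(e/2) + 1\right)$ and observe that the bracketed constant $c := 2\log_b(e/2)+1$ is fixed (depending only on $p$), so for any $\varepsilon > 0$ we have $d - \varepsilon \geq 2\log_b n - 2\log_b(\log_b n)$ for all sufficiently large $n$ as soon as we pick, say, $\varepsilon$ with $0 < \varepsilon < c$ if $c>0$; more robustly, since $\floor{\cdot}$ is monotone and $d-\varepsilon$ exceeds the target by a constant, $\floor{d-\varepsilon} \geq 2\log_b n - 2\log_b(\log_b n)$ once $n$ is large (one should be slightly careful because the target is not an integer, but $\alpha$ is an integer and $\floor{d-\varepsilon}$ is an integer lower bound that is eventually at least the real target).

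Next I would invoke the lower tail half of Matula's statement: the theorem says $\prob{\alpha(G(n,p)) \geq \floor{d+\varepsilon}} \to 1$, and since $\floor{d+\varepsilon} \geq \floor{d-\varepsilon} \geq 2\log_b n - 2\log_b(\log_b n)$ eventually, we conclude $\prob{\alpha(G(n,p)) \geq 2\log_b n - 2\log_b(\log_b n)} \to 1$, which is exactly the ``with high probability'' claim. Here I would just need to make sure the direction of the inequalities in the stated theorem matches up — the theorem as written has $\floor{d-\varepsilon} \geq \alpha(G(n,p)) \geq \floor{d+\varepsilon}$, so the right-hand inequality $\alpha \geq \floor{d+\varepsilon}$ is the ``lower bound'' part I want, and I would use that the event in the corollary is implied by the event in the theorem (for large $n$), so its probability also tends to $1$.

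The main obstacle, such as it is, is purely bookkeeping about floors and the additive constant $c = 2\log_b(e/2)+1$: depending on the sign of $c$ one must choose $\varepsilon$ appropriately (if $c \leq 0$ this simple argument needs the target to be shifted, but note $e/2 > 1$ so $\log_b(e/2) > 0$ whenever $b > 1$, i.e.\ whenever $p \in (0,1)$, hence $c > 1 > 0$ always and any $\varepsilon \in (0,1)$ works). I would also note explicitly that since $2\log_b(\log_b n)$ eventually dominates the constant overshoot only in the wrong direction — i.e.\ we are subtracting it, not adding — there is no issue: we are claiming a \emph{weaker} lower bound than $d$ gives, so the corollary is strictly implied once $n$ is large enough that $c - \varepsilon \geq 0$ and the floor does not lose more than $c - \varepsilon$. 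I would keep the whole argument to two or three sentences in the actual writeup.
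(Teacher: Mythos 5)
Your proposal is correct and is essentially the same (immediate) deduction the paper implicitly makes: the constant $2\log_b(e/2)+1$ appearing in $d$ is strictly greater than $1$ because $e/2 > 1$ and $b>1$, so for small enough $\varepsilon$ the floored quantity in Matula's theorem still exceeds the target $2\log_b n - 2\log_b\log_b n$. One small slip: your closing claim that ``any $\varepsilon\in(0,1)$ works'' for the chain through $\floor{d-\varepsilon}$ is not true for all $p$ — you need $\varepsilon < c - 1 = 2\log_b(e/2)$, and for $p>1-4/e^2\approx 0.46$ (in particular $p=3/4$) that threshold is below $1$; the fix is either to choose $\varepsilon$ small enough, or to go directly through $\floor{d+\varepsilon}\geq \text{target}$, which only needs $c + \varepsilon - 1 > 0$ and therefore really does hold for every $\varepsilon>0$.
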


For random dense IQP circuits we have $p=3/4$ and hence $b=4$, so that $\log_b{n} = \frac12 \log_2 n$.
This implies that the independence number of the interaction graph is with high probability bigger than $\log_2 n-\log_2 \log_2 n$. 
Hence, our strategy for calculating an amplitude runs in $O\left((\log_2 n-\log_2\log_2 n)2^{n-\log_2{n}+\log_2\log_2n}\right) = O\left( \frac{\log^2 n}{n} 2^n\right)$.

We can derive a similar bound for $\gamma$-sparse random graphs, which we prove in the Supplemental Material.
\begin{theorem}\label{thm : independant set sparse graph}
    There exists a constant $C>0$ such that with high probability 
    $$\alpha\left(G\left(n,\frac{3\gamma\ln(n)}{4n}\right)\right) \geq C\frac{n\log\log(n)}{\log(n)}.$$
\end{theorem}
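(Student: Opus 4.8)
The plan is to reduce the claim to a classical lower bound on the independence number of (nearly) triangle-free graphs applied to $G \sim G(n,p)$ with $p = \tfrac{3\gamma\ln n}{4n}$, after observing that such a graph has only polylogarithmically many triangles. Throughout write $d := np = \tfrac{3\gamma}{4}\ln n$ for the mean degree. The quickest route is to invoke Frieze's theorem on $\alpha(G(n,p))$, which for $np\to\infty$ below a polynomial threshold (and in particular when $np=\Theta(\ln n)$) gives the sharp asymptotic $\alpha(G(n,p)) = (1+o(1))\tfrac{2n}{d}\ln d$ with high probability; since $d=\Theta(\ln n)$ we have $\ln d = (1+o(1))\ln\ln n$, and the statement follows with any constant $C < \tfrac{8}{3\gamma}$. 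Below I outline a self-contained argument that loses a constant factor (harmless for the claim) but uses only elementary estimates plus Shearer's bound.

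First I would record two routine facts for $G \sim G(n,p)$, both holding with high probability. By a Chernoff bound the number of edges is $(1+o(1))\binom n2 p$, so the average degree is $\bar d = (1+o(1))d$, and the maximum degree is $O(\log n)$. Second, the expected number of triangles is $\binom n3 p^3 = \Theta((\ln n)^3)$, so by Markov's inequality $G$ contains at most $(\ln n)^4$ triangles. Conditioning on these events, I would then delete one vertex from each triangle of $G$. This removes at most $(\ln n)^4 = o(n)$ vertices and destroys every triangle, so the induced subgraph $G'$ is triangle-free, has $n' = (1-o(1))n$ vertices, and still has average degree $\bar d' = (1+o(1))d\to\infty$ (the deletions remove at most $(\ln n)^4\cdot O(\ln n)=o(nd)$ edges).

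Shearer's theorem for triangle-free graphs then gives
\[
\alpha(G)\;\ge\;\alpha(G')\;\ge\;n'\cdot\frac{\bar d'\ln \bar d'-\bar d'+1}{(\bar d'-1)^2}\;=\;(1+o(1))\,\frac{n\ln d}{d},
\]
using $\alpha(G')\le\alpha(G)$ (as $G'$ is an induced subgraph), $\bar d'=(1+o(1))d$, and $\bar d'\to\infty$. Substituting $d=\tfrac{3\gamma}{4}\ln n$ gives $\ln d = \ln\ln n + O(1)$, hence $\alpha(G)\ge(1+o(1))\tfrac{4}{3\gamma}\cdot\tfrac{n\ln\ln n}{\ln n}$ with high probability; since changing the base of the logarithms only rescales this by a constant, the theorem holds with, e.g., $C=\tfrac{1}{\gamma}$.

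I do not expect a serious obstacle: the only point requiring care is that Shearer's bound demands a genuinely triangle-free graph, which is why the triangle-removal step is included, and that step is harmless precisely because $G(n,\Theta(\log n/n))$ has only $\mathrm{polylog}(n)$ triangles. It is worth noting that a direct first-moment computation already identifies $k=\Theta(\tfrac{n\log\log n}{\log n})$ as the correct order (the expected number of independent sets of that size tends to infinity for $C$ small enough), but the plain second-moment method does \emph{not} establish existence in this regime, since the second moment is dominated by pairs of independent sets with small overlap; this is exactly why one needs a structural bound of Shearer type (or Frieze's more delicate argument). Obtaining the matching \emph{upper} bound, i.e.\ the optimal constant $\tfrac{8}{3\gamma}$, is the genuinely hard part and would use Frieze's theorem, but it is not needed for the stated result.
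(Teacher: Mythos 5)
Your proof is correct and takes essentially the same route as the paper's: bound the number of triangles in $G(n,p)$ by Markov, delete one vertex per triangle to obtain a triangle-free induced subgraph with nearly the same order and average degree, and apply Shearer's bound. The extra polish you add (tracking the constant, mentioning Frieze's sharp asymptotic, and noting why the second-moment method alone fails here) goes beyond what the paper records, but the underlying argument is identical.
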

This bound implies that for random $\gamma$-sparse IQP circuits, our simulation method has a time complexity of $O\left(\frac{n\log\log(n)}{\log(n)} 2^{n\left(1- \frac{C\log\log(n)}{\log(n)}\right)} \right)$. 
Note that this bound is faster than $O(2^n/poly(n))$, for any choice of polynomial (but slower than $O(2^{cn})$ for any $c<1$).

\emph{Benchmarking.}---%
We implemented the algorithm for calculating an amplitude described above and tested it on several sizes of circuits and with different sparsities. The language used was Rust and the benchmarks ran on a single thread on a consumer laptop (Intel Core i7-10750H CPU 2.60GHz). Our results are shown in Figure~\ref{fig:Benchmarks}.
\begin{figure}[tb]
    \centering
    \hspace{-9cm}
    \includegraphics[width=\columnwidth]{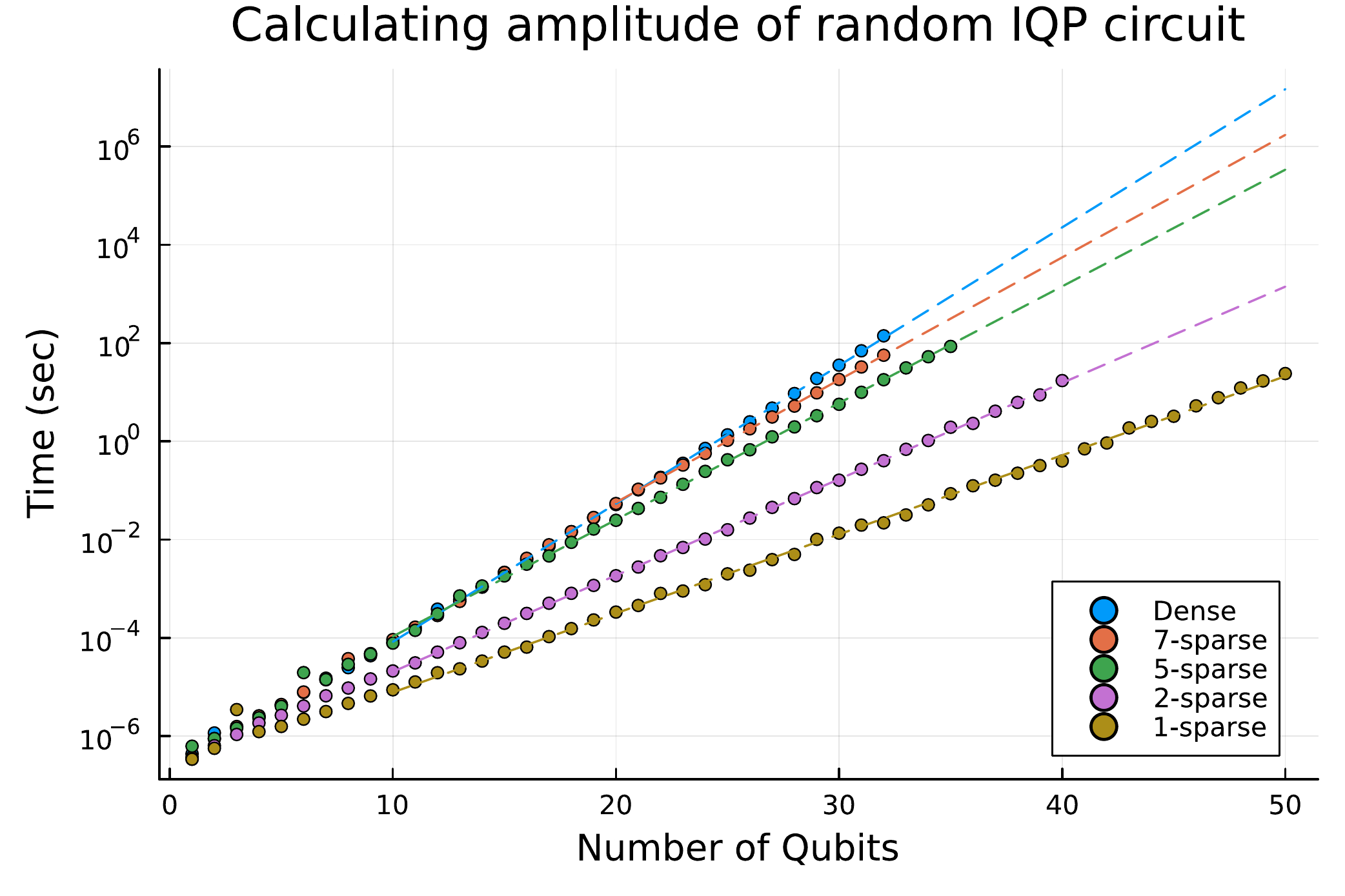}
    \vspace{-0.2cm}
    \caption{The time it takes to calculate a single amplitude using our algorithm, based on the average of $100$ instances. The dashed lines show exponential fits starting from $n=10$. Note that for 7-sparse, the graphs only start to be different from from the dense ones at $n=22$.}
    \label{fig:Benchmarks}
    \vspace{-0.3cm}
\end{figure}
We were able to calculate amplitudes from circuits with up to 50 qubits (depending on the density) in just a few minutes.
As the algorithm is easily parallelisable (since each term in the decomposition can be treated independently), we see that simulating circuits well into 60, or even 70, qubits should be possible with a sizable computing cluster.

Note that the data fits remarkably well to an exponential fit $c2^{\alpha n}$ where $\alpha$ ranges from $0.93$ to $0.53$. This suggests that a better upper bound on the complexity of simulating sparse circuits might be possible.


\emph{Weak simulation.}---The above only describes how to calculate amplitudes of IQP circuits. To sample from circuits, we can use the strong simulation procedure described above as a subroutine of the `gate-by-gate' simulation technique of~\cite{bravyi2021simulate} that avoids calculating marginals. This technique requires the computation of an amplitude for every non-diagonal gate in the circuit. Each of these amplitudes is based on a subcircuit of the original. We note that calculating such an amplitude is at most as hard as calculating an amplitude of the full circuit (and in fact, will often be much easier). Since IQP circuits only have $2n$ non-diagonal gates (corresponding to the layers of Hadamard gates), bootstrapping our strong simulation algorithm to a weak simulation one only adds a linear overhead $O(n)$, which in practice can be negligible.

\emph{An improved algorithm for calculating amplitudes.}---%
It is possible to pick a different set of qubits to decompose with Eq.~\eqref{eq:vertex-cut} which leads to a better asymptotic complexity in the dense case, and might also give practical benefit in the sparse setting.
The idea is to stop cutting vertices before completely disconnecting the diagram and then use a general stabiliser decomposition algorithm. This two-step process allows us to bring down the number of $T$ gates from $O(n^2)$ to a more manageable $O(n)$ before using a more efficient stabiliser decomposition algorithm. To do so, let's consider the \textit{non-Clifford interaction graph} of an IQP circuit. In this graph, there is an edge between two vertices only if they are connected by a non-Clifford phase gadget (i.e.~when $y_{i,j}$ in Eq.~\eqref{eq:IQP-amplitude} is odd). Finding the largest independent set of this graph and removing all the other qubits using Eqs.~\eqref{eq:vertex-cut} and~\eqref{eq:gadget-cut} then results in a diagram where all the interactions between two qubits are Clifford. This diagram can then be given to a stabiliser decomposition algorithm such as that in~\cite{kissinger2022classical}. On average, half of the qubits in the diagram will have a non-Clifford phase that comes from the initial layer of powers of $T$ gates in the construction of the IQP circuit. Therefore, this algorithm runs in time $O\left(2^{n-k}f(k/2)\right)$ where $k$ is the size of the largest independent set and $f(k/2)$ is the time taken to calculate the amplitude of a diagram with $k/2$ $T$ gates by a dedicated stabiliser decomposition algorithm. At the time of this writing, the best general-purpose stabiliser decomposition algorithm is from~\cite{kissinger2022classical} and has a time complexity of $f(t) = O\left(t^2 2^{\beta t}\right)$ where $\beta = \log_2(3)/4 \approx 0.396$.

The main advantage of using this modified approach is that the non-Clifford interaction graph is less dense than the standard interaction graphs, while still being Erdős–Rényi random.
More precisely, the non-Clifford interaction graph is distributed as $G(n,1/2)$. By Corollary~\ref{cor : independent set dense graph}, this graph has with high probability an independent set of size $2\log_2{n}-2\log_2\log_2 n$. Using this approach, calculating an amplitude then runs in $O\left(\frac{(\log n)^{4-\beta}}{n^{2-\beta}} 2^n \right)$. Since $2-\beta \approx 1.604$, this is an improvement over our first approach.
Using this improved algorithm for sparse circuits results in the same asymptotic complexity as we found before, but might still be better in practice. But on the other hand, this method does introduce significant complexity in the implementation which might in fact result in enough slow down to cancel out the asymptotic benefit for relevant parameters.

\emph{Conclusion.}---%
We found a new algorithm for exactly calculating amplitudes of random IQP circuits that both in the dense and sparse setting improve upon the previous asymptotic complexity and allow us to simulate large circuits in practice. Our results show that current and near-term hardware is probably not yet at a level where quantum supremacy could definitively be shown using random IQP circuit sampling.
Our benchmarks suggest that it might be possible to derive better asymptotic bounds for the cost of simulating sparse circuits.

\emph{Acknowledgements.}---%
We would like to thank Tuomas Laakkonen for his fruitful comments and his help with the implementation of the algorithm. We would also like to thank Oliver Riordan for his help in the analysis of the independence number of $\gamma$-sparse random graphs. We acknowledge the support of the Natural Sciences and Engineering Research Council of Canada (NSERC).

\bibliography{main}

\appendix
\section{Independence number of \texorpdfstring{$\gamma$}{gamma}-sparse random graphs}
\label{section: independence proof}

We here restate theorem~\ref{thm : independant set sparse graph}: \\
\noindent \textbf{Theorem \ref{thm : independant set sparse graph}.}
\textit{There exist a constant $C>0$ such that with high probability $$\alpha\left(G\left(n,\frac{3\gamma\ln(n)}{4n}\right)\right) \geq C\frac{n\log\log(n)}{\log(n)}.$$}

The idea behind this proof is to use a classic result from Shearer \cite{shearer_note_1983} about independent sets in triangle-free graphs.
\begin{theorem}[Shearer 1983]\label{thm: density independant}
Let $G$ be a triangle-free graph on n points with average degree $d$, then 
    \begin{equation}
        \alpha(G) \geq n (d \ln(d) - d + 1)/(d - 1)^2.
    \end{equation}
\end{theorem}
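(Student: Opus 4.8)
The plan is to follow Shearer's original argument, which establishes a sharper \emph{vertex-weighted} inequality and then reduces to the stated form by convexity. Write $f\colon[0,\infty)\to\R$ for the function in the statement, $f(d)=(d\ln d-d+1)/(d-1)^2$ for $d\neq 1$, extended continuously by $f(0)=1$ and $f(1)=\tfrac12$. A first useful observation is that $f$ is equivalently the solution of the first-order differential equation $d(d-1)f'(d)=1-(d+1)f(d)$ with $f(0)=1$; this is the form that will emerge from the induction below. The proof then has two ingredients: (i) the weighted bound $\alpha(G)\ge\sum_{v\in V(G)}f(d(v))$ for every triangle-free graph $G$; and (ii) the analytic properties of $f$, namely that it is positive, strictly decreasing, and \emph{convex} on $[0,\infty)$. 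Granting both, Jensen's inequality finishes the argument: $\sum_{v}f(d(v))\ge n\,f\!\big(\tfrac1n\sum_v d(v)\big)=n\,f(d)$.

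I would dispose of the analytic facts (ii) first. That the closed form solves the differential equation, and extends continuously at $d=0,1$ with the stated values, is a routine computation. From the closed form one reads off that $f>0$, that $f$ is strictly decreasing with $f(0)=1$ and $f(d)\to 0$ (at rate $\Theta(\ln d/d)$), and — the one genuinely needed input — that $f$ is convex, which can be checked by differentiating the closed form twice, or by differentiating the differential equation and using the already-known signs of $f$ and $f'$. Convexity is precisely what is used both inside the induction for (i) and in the concluding Jensen step.

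The core is the weighted bound (i), which I would prove by induction on $|V(G)|$. Base cases: the empty graph, and graphs having an isolated vertex $v$ (delete $v$, recurse, use $f(0)=1$). For the inductive step, triangle-freeness supplies the two structural facts that drive everything: for any vertex $v$, the neighbourhood $N(v)$ is independent, and every edge at a vertex $w\in N(v)$ other than $wv$ must leave $\{v\}\cup N(v)$ (an edge between two neighbours of $v$ would close a triangle). One considers adjoining $v$ to the independent set and recursing on $G-\{v\}-N(v)$ — weighed against the alternative of adjoining the entire independent set $N(v)$ and recursing on the larger deletion — and applies the inductive hypothesis to the residual graph; taking the better option (or an optimal convex mixture, and/or a minimum-degree choice of $v$) reduces the desired inequality to a one-variable statement about $f$. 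Since residual degrees only decrease, and the total decrease is pinned down exactly by triangle-freeness as $\sum_{w\in N(v)}(d(w)-1)$, convexity of $f$ bounds the residual-vertex contributions, and what is left over is exactly the functional identity encoded by the differential equation for $f$; hence the induction closes.

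The main obstacle is calibrating this inductive step: one must choose the vertex and the ``take $v$ versus take $N(v)$'' trade-off so that the recursion is \emph{tight}, which is tantamount to discovering the correct functional equation for $f$ (this is where the closed form comes from), and one must verify convexity of $f$ so that both the residual estimate inside the induction and the final Jensen step are valid. Once these are in hand, assembling $\alpha(G)\ge\sum_v f(d(v))\ge n\,f(d)$ is immediate.
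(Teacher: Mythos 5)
First, note that the paper does not prove this statement at all: it is quoted verbatim from Shearer's 1983 note and used as a black box, so the only meaningful comparison is with Shearer's original argument. Your sketch does not match that argument, and it contains a genuine gap at its core. Shearer does \emph{not} first establish the vertex-weighted inequality $\alpha(G)\geq\sum_v f(d(v))$; he proves the average-degree statement directly, by induction on $n$: for each vertex $v$ one has $\alpha(G)\geq 1+\alpha(G-N[v])$, triangle-freeness gives that deleting $N[v]$ removes exactly $\sum_{w\in N(v)}d(w)$ edges (no edge inside $N(v)$ is double-counted), averaging over a uniformly random choice of $v$ gives an expected edge loss of $\frac1n\sum_w d(w)^2\geq d^2$ by Cauchy--Schwarz, and the induction is closed using monotonicity and convexity of $f$ together with the functional identity $d(d-1)f'(d)=1-(d+1)f(d)$ (which you state correctly). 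Your analytic claims about $f$ (positivity, monotonicity, convexity, the ODE) are all fine, and the concluding Jensen step would be valid \emph{if} the weighted bound were available.

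The gap is precisely that the weighted bound is never proved. The sentence ``taking the better option (or an optimal convex mixture, and/or a minimum-degree choice of $v$) reduces the desired inequality to a one-variable statement about $f$ \ldots hence the induction closes'' asserts the entire content of the theorem. It is not a routine verification: if one tries the ``take $v$'' branch alone, the induction requires a vertex with $f(d(v))+\sum_{w\sim v}f(d(w))\leq 1$, which already fails for natural vertex choices (e.g.\ the centre of a path), and it is far from clear that mixing in the ``take $N(v)$'' branch closes the recursion with this exact $f$; the local (degree-sequence) refinements of Shearer-type bounds that are actually known in the literature are proved with different, recursively defined functions or by quite different methods, precisely because this naive induction does not obviously balance. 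So your plan replaces the theorem by a strictly stronger local statement whose proof is left unverified at its only hard point, whereas Shearer's route avoids the local statement altogether by randomising the choice of $v$ inside the global induction. To repair the write-up, either carry out the global random-vertex induction (the standard proof), or supply a complete proof of the weighted inequality, which is a substantially harder task than the theorem you were asked to prove.
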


Even though $\gamma$-sparse random graphs aren't triangle-free, with high probability they contain rather few triangles. We utilise this fact by removing vertices from $G$ until it is triangle-free.
\begin{lemma}
    With high probability $G\left(n,\frac{3\gamma\ln n }{4n}\right)$ has less than $\ln^4 n$ triangles.
\end{lemma}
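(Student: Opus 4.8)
The plan is a straightforward first-moment argument. Let $X$ denote the number of triangles in $G\bigl(n,\tfrac{3\gamma\ln n}{4n}\bigr)$, that is, $X = \sum_{\{i,j,k\}} \mathbbm{1}[ij,jk,ik \text{ all edges}]$ where the sum ranges over the $\binom{n}{3}$ unordered triples of vertices. Each of the three potential edges is present independently with probability $p=\tfrac{3\gamma\ln n}{4n}$, so by linearity of expectation
\begin{equation}
    \E[X] = \binom{n}{3} p^3 = \binom{n}{3}\left(\frac{3\gamma\ln n}{4n}\right)^{3}.
\end{equation}
Since $\binom{n}{3}\le \tfrac{n^3}{6}$, this gives $\E[X] \le \tfrac{(3\gamma)^3}{384}\,\ln^3 n = O(\ln^3 n)$, with the implied constant depending only on $\gamma$.

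Next I would apply Markov's inequality to the nonnegative integer-valued random variable $X$:
\begin{equation}
    \Prob{X \ge \ln^4 n} \;\le\; \frac{\E[X]}{\ln^4 n} \;=\; O\!\left(\frac{\ln^3 n}{\ln^4 n}\right) \;=\; O\!\left(\frac{1}{\ln n}\right) \xrightarrow[n\to\infty]{} 0 .
\end{equation}
Hence $G\bigl(n,\tfrac{3\gamma\ln n}{4n}\bigr)$ has fewer than $\ln^4 n$ triangles with high probability, which is exactly the claim.

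There is essentially no hard step here: the whole argument is a moment computation followed by Markov. The only points requiring any care are (i) making sure the exponents of $n$ cancel when substituting $p=\Theta(\ln n / n)$ into $\binom{n}{3}p^3$, which they do because $\binom{n}{3}=\Theta(n^3)$, leaving a polylogarithmic expectation; and (ii) noting that the gap between the $\Theta(\ln^3 n)$ expectation and the $\ln^4 n$ threshold is a full factor of $\ln n$, so Markov alone is comfortably enough and no concentration inequality (Chernoff, Janson, etc.) is needed. In fact the same argument shows the number of triangles is $O(\ln^{3+\epsilon} n)$ w.h.p.\ for any $\epsilon>0$, but the crude bound $\ln^4 n$ is all that is required for the subsequent vertex-deletion step that reduces $G$ to a triangle-free graph so that Shearer's bound (Theorem~\ref{thm: density independant}) can be applied.
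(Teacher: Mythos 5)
Your proposal is correct and coincides with the paper's own argument: both compute $\E[X]=\binom{n}{3}p^3=O(\ln^3 n)$ by linearity of expectation and then apply Markov's inequality against the threshold $\ln^4 n$, giving failure probability $O(1/\ln n)\to 0$. You've just spelled out the steps a bit more explicitly.
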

\begin{proof}
    Let $X$ be the random variable representing the number of triangles in $G\left(n,\frac{3\gamma\ln n}{4n}\right)$. Then 
    \begin{align*}
        \frac{\esp{X}}{\ln^4 n} &=  \frac{1}{\ln^4 n} \binom{n}{3} \left(\frac{3\gamma\ln n}{4n}\right)^3\\
        &=  \frac{O(\ln^3(n))}{\ln^4(n)}\\
        &=o(1)
    \end{align*} 
    Applying Markov's inequality gives us the result.
\end{proof}
By removing one vertex per triangle of $G\left(n,\frac{3\gamma\ln n}{4n}\right)$, we obtain a triangle free graph $G'$. By the lemma, $G'$ contains with high probability more than $n-\ln^4 n$ vertices.
Let us denote the number of vertices and the average degree of $G'$ by $n'$ and $d'$ respectively. 
Assume that $n$ is large enough that $\ln^4 n < \frac12 n$, so that $n' > \frac12 n$ with high probability.
Notice that $d'\leq \frac{2|E(G)|}{n'}$. By Chernoff bound, $|E(G)| \leq \gamma n\ln(n) $ with high probability. Therefore, $d'\leq 2\gamma n\ln(n)/ n' \leq 4\gamma n \ln(n)/n = 4\gamma \ln n$ with high probability.
We can now use theorem \ref{thm: density independant} to prove the theorem.

\begin{proof} (of Theorem \ref{thm : independant set sparse graph})
We calculate:
\begin{align*}
    \alpha(G) &\geq \alpha(G')\\
            &\geq n' \frac{d' \ln(d') - d' + 1}{(d' - 1)^2}\\
            &= O\left(n' \frac{d' \ln(d')}{d'^2}\right)\\
            &= O\left(n' \frac{\ln(d')}{d'}\right)\\
            &\geq O\left(n' \frac{\ln(4\gamma\ln(n))}{4\gamma\ln(n)}\right)\\
            &\geq O\left((n-\ln^4(n)) \frac{\ln(4\gamma\ln(n))}{4\gamma\ln(n)}\right)\\
            &\geq O\left(n \frac{\ln(\ln(n))}{\ln(n)}\right)
\end{align*}
Hence, $\alpha(G) \geq C n \frac{\ln(\ln(n))}{\ln(n)}$ for some $C>0$ (with high probability) when $n$ is large enough.
\end{proof}

We note that the bounds used to derive this theorem are quite crude when $n$ is small.
For the circuit sizes we considered in our benchmarks, the independent sets were much larger than one could expect by simply looking at those asymptotic results.

\end{document}